\documentclass[letterpaper, 10pt, conference]{latex_classes/ieeeconf}      
\IEEEoverridecommandlockouts                              
\usepackage{algorithmic}
\usepackage{graphicx}
\usepackage{algorithm}
\usepackage{hyperref}
\usepackage{textcomp}
\def\BibTeX{{\rm B\kern-.05em{\sc i\kern-.025em b}\kern-.08em
    T\kern-.1667em\lower.7ex\hbox{E}\kern-.125emX}}
\usepackage{amsmath, amssymb, amsfonts}
\usepackage[capitalise]{cleveref}

\usepackage{xcolor}

\crefname{equation}{}{}
\crefname{assumption}{Assumption}{Assumptions}

\newtheorem{assumption}{Assumption}
\newtheorem{definition}{Definition}
\newtheorem{lemma}{Lemma}
\newtheorem{theorem}{Theorem}
\newtheorem{corollary}{Corollary}

\AtBeginDocument{%
    \setlength{\abovedisplayskip}{1.3ex plus 2pt minus 1pt}%
    \setlength{\belowdisplayskip}{1.3ex plus 2pt minus 1pt}%
    \setlength{\belowdisplayshortskip}{1.3ex plus 2pt minus 1pt}%
    \setlength{\textfloatsep}{1.25\baselineskip plus 0.15\baselineskip minus 0.2\baselineskip}%
    \setlength{\abovecaptionskip}{0.35\baselineskip}%
}

\newcommand{\bbX}{\mathbb{X}}
\newcommand{\bbU}{\mathbb{U}}
\newcommand{\bbR}{\mathbb{R}}
\newcommand{\bbI}[2]{\mathbb{I}_{#1}^{#2}}
\newcommand{\bfu}{\mathbf{u}}

\newcommand{\calU}{\mathcal{U}}
\newcommand{\calZ}{\mathcal{Z}}

\definecolor{myDarkBlue}{RGB}{0,0,139}

\title{\LARGE \bf
    Economic Model Predictive Control with Policy-Guided \\Terminal Ingredients
}

\author{Salim Msaad and Robert D. McAllister
\thanks{The authors are with the Delft Center for Systems and Control, 
    Delft University of Technology, 2628 Delft, The Netherlands (e-mail: 
    \{s.msaad, r.d.mcallister\}@tudelft.nl).}%
}

\begin{document}

\bstctlcite{BSTcontrol}

\maketitle
\thispagestyle{empty}
\pagestyle{empty}

\begin{abstract}

Conventional designs for model predictive control typically rely on terminal costs and constraints derived from a steady state to guarantee closed-loop stability and performance. 
However, this dependence on a steady-state assumption limits the applicability of this control method to systems in which such a fixed operating point is either not available or not desirable. 
This work introduces a novel framework, termed policy-guided MPC, to address this limitation. 
Our approach constructs terminal costs and constraints using a known sub-optimal control policy. 
Specifically, the terminal region is defined around a center determined by a rollout of the policy, and a penalty on deviation from this center is used to define the terminal cost.
This method obviates the need for a steady state or reference trajectory.
Closed-loop performance guarantees are established relative to the guiding policy, for both finite and infinite horizon problems.
The effectiveness of the proposed framework is demonstrated through numerical simulations on an energy management example.

\end{abstract}

\section{Introduction}
Model predictive control (MPC) is widely applied in industry due to its capability to manage constraints and multivariable systems while optimizing performance \cite{rawlingsModelPredictiveControl2017}. 
At each time step, MPC solves an optimal control problem over a finite horizon and applies only the first control input to the system.
Depending on the cost function used in the optimal control problem, MPC can be classified into two main formulations.
The first formulation, known as tracking MPC, is used when a specific reference setpoint or trajectory represents the desired operating condition and the goal is to minimize deviations from this reference.
The second formulation, known as economic MPC, is typically considered when the goal is to optimize a specific performance metric, such as energy consumption, production throughput, or economic profit.
In economic MPC, the cost function directly reflects this performance metric, rather than a tracking error.
This allows the controller to explicitly optimize the operational objectives of the system, making it particularly well suited for applications in which the optimal operation does not consist of a fixed steady state or trajectory.

While tracking MPC typically assumes a positive definite stage cost, economic MPC does not necessarily satisfy this assumption. 
This fundamental difference gives rise to distinct theoretical properties and challenges for economic MPC.
The first closed-loop stability results for economic MPC were based on the optimal steady state of the system.
In \cite{rawlingsUnreachableSetpointsModel2008}, the optimal steady state is determined and included as a terminal equality constraint in the optimal control problem. 
Stability is ensured under the assumptions of model linearity and convex stage cost.
Diehl \textit{et al.} \cite{diehlLyapunovFunctionEconomic2011} employ the same methodology and identify a Lyapunov function for problems satisfying strong duality of the steady-state optimization problem.
This analysis is further generalized by Angeli \textit{et al.} \cite{angeliAveragePerformanceStability2012} to economic MPC problems meeting dissipativity conditions on the system dynamics. The authors also provide results for periodic reference trajectories which are later extended to generic time-varying reference trajectories in \cite{risbeckEconomicModelPredictive2020}.
In \cite{amritEconomicOptimizationUsing2011}, the authors relax the equality constraint on the optimal steady state, replacing it with a terminal cost and a terminal region constraint.

An alternative method to terminal constraint design consists in the use of generalized
terminal ingredients, where the terminal constraint is left as a free variable, rather than being fixed a priori.
Building on the work of Ferramosca \textit{et al.} \cite{ferramoscaMPCTrackingOptimal2008}, Fagiano and Teel \cite{fagianoGeneralizedTerminalState2013} propose a method that allows any admissible fixed point to serve as a terminal constraint.
The framework was later extended in \cite{mullerEconomicModelPredictive2013} and \cite{mullerPerformanceEconomicModel2014} through the use of a self-tuning terminal cost. 
The use of generalized terminal ingredients avoids computing the optimal steady state and enlarges the feasibility set.
Nevertheless, this approach still relies on steady states for terminal constraint construction, and performance guarantees remain relative to the optimal steady state.

In the absence of terminal constraints, Grüne \cite{gruneEconomicRecedingHorizon2013} establishes closed-loop stability and near-optimal performance guarantees under stronger assumptions based on controllability and turnpike properties. Recent results extend this approach to optimal infinite horizon trajectories provided these trajectories satisfy an infinite horizon turnpike property \cite{grune2020economic}. In both cases, sufficiently long horizons are required for these performance guarantees.  

All of the results mentioned above form a strong foundation for stability and performance guarantees in economic MPC. 
However, the design of terminal costs and constraints around a steady state can suffer from performance limitations.
Such limitations occur when even the optimal steady state yields unsatisfactory performance, performing significantly worse than other feasible operating trajectories.
This work addresses these issues by utilizing an existing control policy.
For a large number of applications, there exists---or it is easy to design---a sub-optimal control policy that achieves modest performance.
Of particular interest are applications where this policy is significantly better than maintaining the optimal steady state.
This policy may be obtained from existing control algorithms, expert heuristics, or reinforcement learning.
We propose a novel approach, termed policy-guided MPC (PG-MPC), to construct terminal costs and constraints based on this policy.
In contrast to existing methods to design terminal costs/constraints, this approach provides closed-loop performance guarantees relative to the chosen \textit{policy} rather than a steady state. 

The main contributions of this work are threefold. 
First, we introduce the policy-guided MPC framework.  
Second, we establish closed-loop performance guarantees relative to the guiding policy. 
Third, we demonstrate the effectiveness of the proposed approach through numerical simulations on an energy management example with time-varying prices.

\section{Policy-Guided Model Predictive Control}

\subsection{Problem Formulation}
We consider the discrete-time dynamical system
\begin{equation}\label{eq:f}
    x^+ = f(x,u) \qquad f:\bbX\times\bbU\rightarrow\bbR^n
\end{equation}
in which $x\in\bbX\subseteq\bbR^n$ is the state and $u\in\bbU\subseteq\bbR^m$ is the input. 
Here $\bbX$ and $\bbU$ denote the state and input constraint sets, respectively.
For this system, we consider the stage cost $\ell:\bbX\times\bbU\rightarrow\bbR_{\geq 0}$ to quantify the relative performance of different state/input combinations. 
We make the following standard regularity assumption for this system and stage cost.

\begin{assumption}[Continuous functions and closed sets]\label{as:cont}
    The functions $f:\bbX\times\bbU\rightarrow\bbR^n$ and $\ell:\bbX\times\bbU\rightarrow\bbR_{\geq 0}$ are continuous, $\bbX$ is closed, and $\bbU$ is compact.
\end{assumption}

The standard optimal control problem is to select an input trajectory that minimizes the infinite-horizon discounted cost function from an initial state $x(0)$:
\begin{equation*}
    \sum_{k=0}^{\infty} \gamma^k \ell(x(k), u(k)) \qquad \textnormal{s.t.} \qquad \cref{eq:f}
\end{equation*}
in which $\gamma\in(0,1]$ is the discount factor.

\subsection{Available policy}
We assume that an initial sub-optimal policy $\pi(\cdot)$ is already available for the system. 
This policy may be obtained from existing control algorithms, prior domain expertise, or reinforcement learning. 
However, a crucial practical requirement in the subsequently proposed MPC framework is that evaluating this policy is computationally efficient. 
Thus, constructing this initial policy using methods that are computationally demanding to evaluate, such as MPC, may not be desirable.

Let $\phi_{\pi}(k;x)$ denote the state at time $k$ given the initial state $x$ for the autonomous, closed-loop system $x^+=f(x,\pi(x))$.
To streamline notation, we define the stage cost along this closed-loop trajectory as
\begin{equation*}
    \ell_\pi(k,x) := \ell(\phi_\pi(k;x),\pi(\phi_\pi(k;x))).
\end{equation*}
For this policy, there exists a corresponding value function $J_{\pi}(x)$ defined as
\begin{equation*}
    J_{\pi}(x) := \sum_{k=0}^{\infty}\gamma^k\ell(\phi_{\pi}(k;x),\pi(\phi_{\pi}(k;x)) )
\end{equation*}
and, by definition, satisfying the Bellman equation
\begin{equation*}
    J_{\pi}(x) = \ell(x,\pi(x)) + \gamma J_{\pi}(f(x,\pi(x))) \quad \forall x\in\bbX
\end{equation*}
Moreover, we make the following assumption.
\newline
\begin{assumption}[Policy assumptions]\label{as:finite_and_lipschitz}
    \,
    \begin{itemize}
        \item The policy is state and input constraint-satisfying, i.e.
        \begin{equation*}
            \pi(x)\in\bbU, \quad f(x,\pi(x))\in\bbX \quad \, \forall x\in\bbX.
        \end{equation*}
        \item The function $J_{\pi}(x)$ is finite, i.e.
        \begin{equation*}
            J_{\pi}(x) < \infty \quad \forall x\in\bbX
        \end{equation*}
        and Lipschitz continuous, i.e., there exists a constant $L>0$ such that:
        \begin{equation*}
            \lvert J_{\pi}(x_1) - J_{\pi}(x_2)\rvert \leq L \lvert x_1 - x_2\rvert \quad\, \forall x_1,x_2\in\bbX
        \end{equation*}
    \end{itemize}
\end{assumption}

Here and in all subsequent sections, $\lvert\cdot\rvert$ represents the Euclidean norm.

\subsection{Policy-Guided MPC Framework}\label{subsec:pgmpc_framework}
In conventional MPC methods, terminal costs and constraints are constructed based on a steady state or fixed reference trajectory, defined a priori for the system. 
Policy-guided MPC instead actively constructs and modifies these terminal costs and constraints based on the available policy.
Specifically, we define the terminal cost $V_f(x,x_f)$ and the terminal constraint set $\bbX_f(x_f)$ with respect to a central point $x_f\in\bbX$ that is updated at each time step using the available policy $\pi(\cdot)$.
Using two design parameters $\tau\geq0$ and $\rho>0$, we define:
\begin{equation*}
    V_f(x,x_f) := \rho \lvert x - x_f\rvert
\end{equation*}
\begin{equation*}
    \bbX_f(x_f) := \{x\in\bbX \mid \lvert x - x_f\rvert \leq \tau\}
\end{equation*}
Moreover, we introduce the policy approximation error $d$ as:
\begin{equation*}
    d := \max\{0,(L-\rho)\tau\}
\end{equation*}
The role of $d$ in the PG-MPC framework is established in \cref{subsec:cost_decrease_ineq}, where we show that smaller values of $d$ yield tighter performance bounds relative to the guiding policy.
Notably, achieving $d = 0$ does not require knowledge of $J_\pi(x)$ or the Lipschitz constant $L$; it suffices to choose a sufficiently large $\rho$ or to set $\tau = 0$.
However, excessively large $\rho$ or $\tau = 0$ reduces PG-MPC's freedom to optimize.


For a horizon $N\geq 1$, let $\hat{\phi}(k;x,\mathbf{u})$ denote the predicted state at time $k\in\bbI{0}{N}$, where $\bbI{a}{b}:=\{a,a+1,\dots,b\}$ denotes the integer interval from $a$ to $b$, for the dynamics in \cref{eq:f} given the initial state $x\in\bbX$ and the input trajectory $\bfu=(u(0),u(1),\dots,u(N-1))\in\bbU^N$. 
We then define the PG-MPC objective function $V_N:\bbX\times\bbX\times\bbU^N\rightarrow\bbR$ as:
\begin{equation*}
\begin{aligned}
    V_N(x,x_f,\bfu) := {} \sum_{k=0}^{N-1} \gamma^k\ell(&\hat{\phi}(k;x,\bfu),u(k)) \\
    & + \gamma^N V_f(\hat{\phi}(N;x,\bfu),x_f)
\end{aligned}
\end{equation*}
    
The input constraint set is defined as all feasible input trajectories that drive the system from the initial state $x$ to the terminal region $\bbX_f(x_f)$, while satisfying the state constraints at all intermediate time steps:
\begin{equation*}
    \calU_N(x,x_f) := \left\{\bfu\in\bbU^N \;\middle|\; \begin{array}{l}
        \hat{\phi}(k;x,\bfu)\in\bbX \quad \forall k\in\bbI{1}{N-1}, \\
        \hat{\phi}(N;x,\bfu)\in\bbX_f(x_f)
    \end{array}\right\}
\end{equation*}
The feasible set for the pair $(x,x_f)$ is then defined as all pairs that result in a non-empty input constraint set:
\begin{equation*}
    \calZ_N := \{(x,x_f)\in\bbX\times\bbX \mid \calU_N(x,x_f)\neq\emptyset\}.
\end{equation*}
Policy-guided MPC then solves the following optimization problem at each time step:
\begin{equation}\label{eq:pgmpc}
    V_N^0(x,x_f) := \min_{\bfu\in\calU_N(x,x_f)} V_N(x,x_f,\bfu)
\end{equation} 
and the optimal solution to this problem is denoted as
\begin{equation*}
    \bfu^0(x,x_f) \!:= \!(u^0(0;x,x_f),u^0(1;x,x_f),\!\dots\!,u^0(N\!-\!1;x,x_f))
\end{equation*}
Given this optimal solution, only the first input is applied to the system at each time step. We define the PG-MPC control policy as:
\begin{equation*}
    \kappa(x,x_f) = u^0(0;x,x_f)
\end{equation*}
At each time step, we update $x_f$ by rolling out the policy $\pi(\cdot)$ from the last state of the optimal predicted trajectory.
Thus, we consider the following closed-loop dynamics for the pair $(x,x_f)$:
\begin{equation}\label{eq:(x,x_f)_cl}
    \begin{aligned}
        x^+ = h_1(x,x_f) := f(&x,u^0(0;x,x_f)) \\
        x_f^+ = h_2(x,x_f) := f(&x^0(N),\pi(x^0(N))) \\
        &x^0(N) = \hat{\phi}(N;x,\bfu^0(x,x_f))
    \end{aligned}
\end{equation}

The closed-loop dynamics of the pair $(x,x_f)$ are illustrated in \cref{fig:cl_dia}.
Note that the next central point $x_f^+$ is obtained by rolling out the policy $\pi(\cdot)$ from the last predicted state of the \textit{optimal trajectory}, rather than from the current central point $x_f$.
For the first time step, however, the initial central point $x_f(0)$ is defined as the state obtained by rolling out the policy $\pi(\cdot)$ from the initial state $x(0)$ for $N$ steps, i.e., we have that
\begin{equation*}
    x_f(0) = \phi_{\pi}(N;x(0)).
\end{equation*}

\begin{figure}
    \centering
    \includegraphics[width=0.5\textwidth]{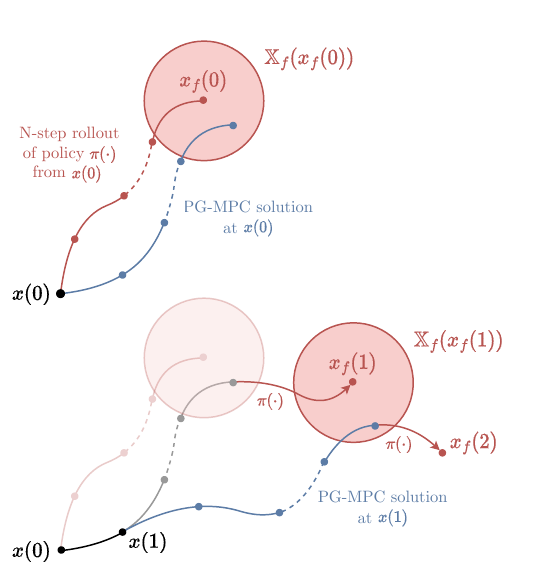}
    \caption{Illustration of the closed-loop dynamics of the pair $(x,x_f)$ for policy-guided MPC. For the first time step, the initial central point $x_f$ is defined as the state obtained by rolling out the policy $\pi(\cdot)$ from the initial state $x$ for $N$ steps. The next state $x^+$ is obtained by applying the first input of the optimal input trajectory, while the next central point $x_f^+$ is obtained by rolling out for one step the policy $\pi(\cdot)$ from the last predicted state of the optimal trajectory. This is repeated for all subsequent time steps.}
    \label{fig:cl_dia}
\end{figure}

\section{Closed-loop guarantees}\label{sec:closed_loop_guarantees}
To streamline notation, we define the extended state as $z:=(x,x_f)\in\calZ_N$ and describe the dynamics of $z$ via the difference equation
\begin{equation}\label{z_cl}
    z^+ = (x^+,x_f^+) = h(z) := (h_1(x,x_f),h_2(x,x_f))
\end{equation}

We now consider the closed-loop extended state from the initial state $x\in\bbX$.
Given the initial state $x\in\bbX$ and the dynamics in \cref{z_cl}, let $\psi(k;x)$ denote the closed-loop extended state at time $k$, i.e., $\psi(\cdot)$ satisfies
\begin{align*}
    &\psi(0;x) =(x,\phi_{\pi}(N;x)) \\
    &\psi(k+1;x) = h(\psi(k;x)) \qquad \forall k\in\bbI{0}{\infty}
\end{align*}
Thus, we have the closed-loop state at time $k$, denoted as $\phi_{\kappa}(k;x)$ and defined such that
\begin{equation*}
    (\phi_{\kappa}(k;x),\,\cdot\,) = \psi(k;x).
\end{equation*}
To streamline notation, we define the stage cost along the PG-MPC closed-loop trajectory as
\begin{equation*}
    \ell_\kappa(k,x) := \ell(\phi_\kappa(k;x),\kappa(\psi(k;x))).
\end{equation*}

To analyze the closed-loop properties of policy-guided MPC, we frequently make use of the following candidate solution for the optimization problem in \cref{eq:pgmpc}: 
For any $z=(x,x_f)$, we denote the candidate solution for $z^+$ as
\begin{equation}\label{eq:warm_start}
    \begin{aligned}
        \tilde{\bfu}^+ = \zeta(z) := \Big(u^0(1;z),\dots,u^0&(N-1;z), \\
        &\pi\big(\hat{\phi}(N;x,\bfu^0(z))\big)\Big)
    \end{aligned}
\end{equation}

To guarantee that the MPC optimization problem remains feasible at all times, we next establish that $\calZ_N$ is positively invariant.

\begin{definition}[Positive invariance]
    The set $\calZ$ is positive invariant for the system $z^+=f(z)$ if $f(z)\in\calZ$ for all $z\in\calZ$.
\end{definition}

\begin{lemma}\label{lem:posinv_and_welldef}
    If \cref{as:cont,as:finite_and_lipschitz} hold, then for each $z\in\calZ_N$ a solution to the optimization problem in \cref{eq:pgmpc} exists, the set $\calZ_N$ is positive invariant for the system in \cref{z_cl}, and the closed-loop (extended) state trajectories $\psi(k;x)$ are well defined for all $x\in\bbX$ and $k\in\bbI{0}{\infty}$.
\end{lemma}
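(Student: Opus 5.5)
The argument has three parts, taken in order: existence of a minimizer via a Weierstrass argument, positive invariance via the candidate solution $\zeta(z)$ in \cref{eq:warm_start}, and well-definedness of $\psi(k;x)$ by showing the initial extended state lies in $\calZ_N$ and then inducting.

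\textbf{Existence.} Fix $z=(x,x_f)\in\calZ_N$. By \cref{as:cont}, $f$ is continuous, so each predicted state $\hat{\phi}(k;x,\bfu)$ is a continuous function of $\bfu$ by induction on $k$. Hence $V_N(x,x_f,\cdot)$ is continuous, being a finite sum of continuous terms; note that $V_f(\cdot,x_f)=\rho\lvert\cdot-x_f\rvert$ is continuous.

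The set $\calU_N(x,x_f)$ is the intersection of the compact set $\bbU^N$ with the preimages of the closed sets $\bbX$ and $\bbX_f(x_f)$ under continuous maps. The set $\bbX_f(x_f)$ is closed because it is the intersection of $\bbX$ with a closed ball. So $\calU_N(x,x_f)$ is a closed subset of a compact set, hence compact, and it is nonempty by the definition of $\calZ_N$. The Weierstrass theorem then gives a minimizer. If the minimizer is not unique, $\bfu^0(z)$ should be read as any fixed selection, so that $h$ is a well-defined map.

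\textbf{Positive invariance.} Let $z\in\calZ_N$, write $x^0(k)=\hat{\phi}(k;x,\bfu^0(z))$, and let $z^+=(x^+,x_f^+)=h(z)$. Then $x^+=x^0(1)$ and $x_f^+=f(x^0(N),\pi(x^0(N)))$.

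First, $x^+\in\bbX$. For $N\geq 2$ this holds because $x^0(1)\in\bbX$ is one of the state constraints. For $N=1$ it holds because $x^0(1)\in\bbX_f(x_f)\subseteq\bbX$.

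Second, $x_f^+\in\bbX$. Since $x^0(N)\in\bbX$, \cref{as:finite_and_lipschitz} gives $f(x^0(N),\pi(x^0(N)))\in\bbX$.

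Now consider $\tilde{\bfu}^+=\zeta(z)$. Its first $N-1$ entries lie in $\bbU$, and the last entry $\pi(x^0(N))$ lies in $\bbU$ by \cref{as:finite_and_lipschitz}. Starting from $x^+$, it produces $\hat{\phi}(k;x^+,\tilde{\bfu}^+)=x^0(k+1)$ for $k\in\bbI{0}{N-1}$, together with $\hat{\phi}(N;x^+,\tilde{\bfu}^+)=x_f^+$. The intermediate states $x^0(2),\dots,x^0(N)$ lie in $\bbX$. The terminal state satisfies $\lvert x_f^+-x_f^+\rvert=0\leq\tau$, so it lies in $\bbX_f(x_f^+)$. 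This step is the crux of the argument: the terminal region is re-centered exactly at the policy rollout from $x^0(N)$, so the constraint holds trivially for any $\tau\geq 0$. Hence $\tilde{\bfu}^+\in\calU_N(z^+)$, and so $z^+\in\calZ_N$.

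\textbf{Well-definedness.} For any $x\in\bbX$, the initial extended state is $\psi(0;x)=(x,\phi_\pi(N;x))$. Take the policy inputs $\bfu=(\pi(\phi_\pi(0;x)),\dots,\pi(\phi_\pi(N-1;x)))$. By \cref{as:finite_and_lipschitz} and induction, each input lies in $\bbU$ and each state $\phi_\pi(k;x)$ lies in $\bbX$. The terminal state equals $x_f(0)$ exactly, so the terminal constraint holds with distance zero. Thus $\psi(0;x)\in\calZ_N$.

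By invariance and induction on $k$, $\psi(k;x)\in\calZ_N$ for all $k\in\bbI{0}{\infty}$. Each step therefore has a solution, so $h(\psi(k;x))$ is defined and the whole trajectory is well defined. The only real care needed is the $N=1$ edge case and the closedness of $\bbX_f(x_f)$; I expect no genuine obstacle.
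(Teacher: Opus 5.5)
Your proposal is correct and follows essentially the same route as the paper's proof: existence of a minimizer, positive invariance of $\mathcal{Z}_N$ via the shifted candidate $\zeta(z)$ whose terminal state lands exactly on the re-centered $x_f^+$, and feasibility of $\psi(0;x)$ via the $N$-step policy rollout, followed by induction. The differences are minor. You prove existence directly by a Weierstrass argument where the paper cites a standard MPC existence result, and you make explicit some details the paper leaves implicit: $x^+\in\bbX$ (including the $N=1$ case) and a fixed selection of $\bfu^0$ so that $h$ is well defined.
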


\begin{proof}
We have that $V_f(\cdot)$ is continuous and $\bbX_f$ is compact by definition. Combined with regularity conditions in \cref{as:cont}, we satisfy the standard requirements to ensure that a solution to the MPC optimization problem in \cref{eq:pgmpc} exists for all $z\in\calZ_N$ \cite[Prop. 2.4]{rawlingsModelPredictiveControl2017}. 

Choose any $z\in\calZ_N$. 
For $z^+=h(z)$, we consider the candidate solution $\tilde{\bfu}^+ = \zeta(z)$. 
Clearly, $\tilde{\bfu}^+\in\bbU^N$.
Since the inputs $u^0(1;z),\dots,u^0(N-1;z)$ were feasible for $z$, the intermediate predicted states satisfy $\hat{\phi}(k;x^+,\tilde{\bfu}^+)=\hat{\phi}(k+1;x,\bfu^0(z))\in\bbX$ for all $k\in\bbI{1}{N-1}$.
Since the first $N-2$ inputs are the same as the solution for $z$, we have that
\begin{equation*}
    x^0(N) := \hat{\phi}(N;x,\bfu^0(z)) = \hat{\phi}(N-1;x^+,\tilde{\bfu}^+)
\end{equation*}
and therefore
\begin{equation*}
    \hat{\phi}(N;x^+,\tilde{\bfu}^+) = f(x^0(N),\pi(x^0(N))) = x_f^+\in\bbX_f(x_f^+).
\end{equation*}
Furthermore, $x_f^+\in\bbX$ since the policy satisfies the state constraints by \cref{as:finite_and_lipschitz}.
Thus, we have that $\tilde{\bfu}^+\in\calU_N(z^+)$ and therefore $z^+\in\calZ_N$. 
Since the choice of $z\in\calZ_N$ was arbitrary, we have that $\calZ_N$ is positive invariant. 

Choose any initial $x\in\bbX$. 
Then, we have that the initial $x_f$ is defined as $x_f=\phi_{\pi}(N;x)$. 
Thus, we have that this terminal state can be reached by using the same policy rollout. 
Specifically, we can define the input trajectory
\begin{equation*}
    \tilde{\bfu} = 
    (\pi(\phi_{\pi}(0;x)),\pi(\phi_{\pi}(1;x)),\dots, 
    \pi(\phi_{\pi}(N-1;x)))
\end{equation*}
By recursive application of \cref{as:finite_and_lipschitz}, we have that $\tilde{\bfu}\in\bbU^N$ and $\hat{\phi}(k;x,\tilde{\bfu})=\phi_{\pi}(k;x)\in\bbX$ for all $k\in\bbI{1}{N}$.
In particular, $\hat{\phi}(N;x,\tilde{\bfu})=\phi_{\pi}(N;x)=x_f\in\bbX_f(x_f)$.
Thus, $\tilde{\bfu}\in\calU_N(x,x_f)$ and therefore $z=(x,x_f)\in\calZ_N$. 
Therefore, $\psi(0;x)\in\calZ_N$ and $\kappa(\psi(0;x))$ is well defined.
Since $\calZ_N$ is positive invariant, PG-MPC remains feasible and $\psi(k;x)$ is well defined.
\end{proof}

\subsection{Cost decrease inequality}\label{subsec:cost_decrease_ineq}
To simplify the subsequent analysis, we define the shifted cost as the sum of the optimal MPC cost and the value function for the policy $\pi(\cdot)$ starting at the central point $x_f$: 
\begin{equation}\label{eq:shifted_cost}
    \bar{V}_N^0(x,x_f) := V_N^0(x,x_f) + \gamma^N J_{\pi}(x_f)
\end{equation}

For this shifted cost, we can establish the following cost decrease inequality.
\begin{lemma}[Cost decrease inequality]
    If \cref{as:cont,as:finite_and_lipschitz} hold, then we have that
    \begin{equation}\label{eq:cost_decrease}
        \gamma\bar{V}_N^0(x^+,x_f^+) \leq  \bar{V}_N^0(x,x_f) - \ell(x,\kappa(x,x_f)) + \gamma^Nd
    \end{equation}
    for all $(x,x_f)\in\calZ_N$. 
\end{lemma}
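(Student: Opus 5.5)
The plan is the standard shifted-candidate argument: use the candidate solution $\tilde{\bfu}^+=\zeta(z)$ from \cref{eq:warm_start}, merge the resulting one-step policy cost with $J_\pi$ through the Bellman equation, and absorb the mismatch between $x^0(N)$ and $x_f$ via the Lipschitz bound on $J_\pi$.

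Fix $z=(x,x_f)\in\calZ_N$ and write $x^0(k):=\hat{\phi}(k;x,\bfu^0(z))$ for $k\in\bbI{0}{N}$. By \cref{lem:posinv_and_welldef}, the optimum $\bfu^0(z)$ exists and the candidate $\tilde{\bfu}^+=\zeta(z)$ belongs to $\calU_N(z^+)$. Hence $V_N^0(z^+)\leq V_N(x^+,x_f^+,\tilde{\bfu}^+)$. The predicted trajectory of the candidate is $x^0(1),\dots,x^0(N)$ followed by $f(x^0(N),\pi(x^0(N)))=x_f^+$. Its terminal cost is therefore $\rho\lvert x_f^+-x_f^+\rvert=0$. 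This gives
\begin{equation*}
V_N(x^+,x_f^+,\tilde{\bfu}^+)=\sum_{k=1}^{N-1}\gamma^{k-1}\ell(x^0(k),u^0(k;z))+\gamma^{N-1}\ell(x^0(N),\pi(x^0(N))).
\end{equation*}
Multiplying by $\gamma$ and comparing with $V_N^0(z)$ (re-indexing the sum) yields
\begin{equation*}
\gamma V_N^0(z^+)\leq V_N^0(z)-\ell(x,\kappa(z))-\gamma^N\rho\lvert x^0(N)-x_f\rvert+\gamma^N\ell(x^0(N),\pi(x^0(N))).
\end{equation*}

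Next I add $\gamma\cdot\gamma^N J_\pi(x_f^+)$ to both sides, so that the left side becomes $\gamma\bar V_N^0(z^+)$. On the right, the terms
\begin{equation*}
\gamma^N\bigl[\ell(x^0(N),\pi(x^0(N)))+\gamma J_\pi(f(x^0(N),\pi(x^0(N))))\bigr]
\end{equation*}
collapse to $\gamma^N J_\pi(x^0(N))$ by the Bellman equation. This step uses $x^0(N)\in\bbX$ and finiteness of $J_\pi$ from \cref{as:finite_and_lipschitz}. It remains to show
\begin{equation*}
J_\pi(x^0(N))-\rho\lvert x^0(N)-x_f\rvert\leq J_\pi(x_f)+d,
\end{equation*}
after which adding and subtracting $\gamma^N J_\pi(x_f)$ produces $\bar V_N^0(z)$ and hence \cref{eq:cost_decrease}.

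This last inequality is the crux, but it is elementary. Since $x^0(N),x_f\in\bbX$, Lipschitz continuity gives $J_\pi(x^0(N))\leq J_\pi(x_f)+L\lvert x^0(N)-x_f\rvert$. The left side is therefore at most $J_\pi(x_f)+(L-\rho)e$, where $e:=\lvert x^0(N)-x_f\rvert$. The terminal constraint $x^0(N)\in\bbX_f(x_f)$ gives $e\in[0,\tau]$. The function $e\mapsto(L-\rho)e$ is linear, so its maximum over $[0,\tau]$ is $\max\{0,(L-\rho)\tau\}=d$, which completes the bound. The only care needed is bookkeeping of the powers of $\gamma$ in the re-indexing and using that all quantities are finite, so the additions are legitimate.
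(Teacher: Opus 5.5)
Your proof is correct and follows essentially the same route as the paper's proof. Both use the shifted candidate $\zeta(z)$ with zero terminal cost at $x_f^+$, apply the Bellman equation to absorb the appended policy stage cost, and combine the Lipschitz bound with $\lvert x^0(N)-x_f\rvert\le\tau$ to get $(L-\rho)\lvert x^0(N)-x_f\rvert\le d$. You also state explicitly two steps the paper leaves implicit: that the candidate is feasible, and that the terminal constraint is what bounds the error term by $d$.
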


\begin{proof} 
    For any $(x,x_f)\in\mathcal{Z}_N$, we denote the warm-start for $(x^+,x_f^+)$ as in \cref{eq:warm_start}.
    We therefore have that
    \begin{equation}\label{eq:V0_decrease}
        \begin{aligned}
            \gamma V_N(x^+,x_f^+&,\tilde{\mathbf{u}}^+) - V_N^0(x,x_f) = \\
            &-\ell(x,u^0(0;x,x_f)) \\
            &+ \gamma^N\ell(x^0(N),\pi(x^0(N)))\\ 
            &- \gamma^NV_f(x^0(N),x_f)\\
            &+\gamma^{N+1}V_f(f(x^0(N),\pi(x^0(N))),x_f^+)
        \end{aligned}
    \end{equation}
    where $x_f^+ = f(x^0(N),\pi(x^0(N)))$ and thus
    \begin{equation*}
        V_f(f(x^0(N),\pi(x^0(N))),x_f^+) = V_f(x_f^+,x_f^+) = 0
    \end{equation*}
    From \cref{as:finite_and_lipschitz}, we have that:
    \begin{equation*}
        J_{\pi}(x^0(N)) - J_{\pi}(x_f) \leq L \lvert x^0(N) - x_f\rvert
    \end{equation*}
    Moreover, since $J_{\pi}(x)$ satisfies the Bellman equation we have
    \begin{equation*}
        \gamma J_{\pi}(x_f^+) - J_{\pi}(x_f) \leq -\ell(x^0(N),\pi(x^0(N))) + L \lvert x^0(N) - x_f\rvert
    \end{equation*}
    Multiplying the above inequality by $\gamma^N$ and adding it to the cost decrease equality in \cref{eq:V0_decrease}, we obtain:
    \begin{equation*}
        \small
        \begin{aligned}
            \gamma \biggl(\!&V_N(x^+, x_f^+, \tilde{\mathbf{u}}^+) + \gamma^N J_\pi(x_f^+))\!\biggr) \!-\! \biggl( \!V_N^0(x,x_f)+\gamma^N J_\pi(x_f) \!\biggr) \\
            \leq& - \ell(x,\kappa(x,x_f)) \!-\! \gamma^N V_f(x^0(N), x_f) \!+\! \gamma^N L \lvert x^0(N) - x_f\rvert
        \end{aligned}
    \end{equation*}
    in which
    \begin{equation*}
        \begin{aligned}
            - \gamma^N V_f(x^0(N), x_f) + \gamma^N L \lvert x^0&(N) - x_f\rvert \\
            &= \gamma^N (L - \rho) \lvert x^0(N) - x_f\rvert \\
            &\leq \gamma^N d
        \end{aligned}
    \end{equation*}
    Then, by optimality and from the definition of $\bar V_N^0(\cdot)$, we obtain the cost decrease inequality in \cref{eq:cost_decrease}.
\end{proof}

\subsection{Nominal performance bounds}
We now establish nominal performance bounds for the closed-loop system under the policy-guided MPC scheme. 
These results quantify how the long-term cost of the resulting policy-guided MPC control law $\kappa(\cdot)$ compares to that of the guiding policy $\pi(\cdot)$.
In particular, we show that the cumulative cost of policy-guided MPC remains within a bounded deviation $d$ from the cost achieved by the guiding policy, with the bound scaling with the discount factor $\gamma$ and the prediction horizon $N$.

\begin{theorem}[Transient performance bound]
    If \cref{as:cont,as:finite_and_lipschitz} hold, then we have that for any $T\geq1$:
    \begin{equation}\label{eq:perf_bound}
        \begin{aligned}
            \sum_{k=0}^{T-1}\gamma^k\ell_\kappa(k,x) \leq J_{\pi}(x) + \gamma^N\sum_{k=0}^{T-1}\gamma^kd
        \end{aligned}
    \end{equation} 
\end{theorem}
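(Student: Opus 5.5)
The plan is to telescope the cost decrease inequality \cref{eq:cost_decrease} along the closed-loop extended trajectory $\psi(k;x)$ and then bound the initial shifted cost by $J_\pi(x)$. By \cref{lem:posinv_and_welldef}, $\psi(k;x)\in\calZ_N$ for all $k\geq 0$, so \cref{eq:cost_decrease} holds at every $z=\psi(k;x)$. Writing $\bar V_k := \bar V_N^0(\psi(k;x))$, multiplying the inequality at time $k$ by $\gamma^k$ gives
\begin{equation*}
    \gamma^k\ell_\kappa(k,x) \leq \gamma^k\bar V_k - \gamma^{k+1}\bar V_{k+1} + \gamma^{k}\gamma^N d .
\end{equation*}
Summing over $k\in\bbI{0}{T-1}$ telescopes to
\begin{equation*}
    \sum_{k=0}^{T-1}\gamma^k\ell_\kappa(k,x) \leq \bar V_0 - \gamma^T\bar V_T + \gamma^N\sum_{k=0}^{T-1}\gamma^k d .
\end{equation*}
Since $\ell\geq 0$ and $V_f\geq 0$, we have $V_N^0\geq 0$, and $J_\pi\geq 0$ as a sum of nonnegative terms. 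Hence $\bar V_T\geq 0$ and the term $-\gamma^T\bar V_T$ can be dropped.

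It then remains to show $\bar V_0\leq J_\pi(x)$. At $k=0$ we have $x_f=\phi_\pi(N;x)$, and the policy rollout $\tilde\bfu=(\pi(\phi_\pi(0;x)),\dots,\pi(\phi_\pi(N-1;x)))$ was shown to be feasible in the proof of \cref{lem:posinv_and_welldef}. Its terminal state equals $x_f$, so the terminal cost vanishes and
\begin{equation*}
    V_N(x,x_f,\tilde\bfu)=\sum_{k=0}^{N-1}\gamma^k\ell_\pi(k,x).
\end{equation*}
By optimality, $V_N^0(x,x_f)\leq V_N(x,x_f,\tilde\bfu)$. Adding $\gamma^N J_\pi(\phi_\pi(N;x))=\sum_{k=N}^{\infty}\gamma^k\ell_\pi(k,x)$, which follows from the definition of $J_\pi$ (equivalently, from iterating the Bellman equation), yields $\bar V_0\leq J_\pi(x)$. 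Substituting this into the telescoped sum gives \cref{eq:perf_bound}.

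No step is genuinely hard. The two points that need care are the nonnegativity of $\bar V_T$, which requires $J_\pi\geq 0$ (guaranteed since $\ell$ maps to $\bbR_{\geq 0}$), and the exact tail identity for $J_\pi$ at the initial central point. Finiteness of all quantities is guaranteed by \cref{as:finite_and_lipschitz}.
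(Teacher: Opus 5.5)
Your proposal is correct and follows the paper's proof: telescope the cost decrease inequality along $\psi(k;x)$, drop the nonnegative term $\gamma^T\bar V_N^0(\psi(T;x))$, and bound $\bar V_N^0(\psi(0;x))$ by $J_\pi(x)$ using the policy-rollout candidate with $x_f=\phi_\pi(N;x)$. Your justification of $\bar V_N^0\geq 0$, via $V_f\geq 0$ and $J_\pi\geq 0$, is slightly more explicit than the paper's.
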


\begin{proof}
    Consider any initial state $x\in\mathbb{X}$ and define $x_f:=\phi_\pi(N;x)$.
    We have that $(x,x_f)=\psi(0;x)\in\mathcal{Z}_N$ and from \cref{lem:posinv_and_welldef}, $\psi(k;x)\in\mathcal{Z}_N$ for $k\in\bbI{0}{\infty}$.
    \\
    Repeated application of \cref{eq:cost_decrease} along the closed-loop trajectory from $k=0$ to $T-1$ gives:
    \begin{equation*}
        \begin{aligned}
            &\sum_{k=0}^{T-1}\gamma^k\ell_\kappa(k,x) + \gamma^{T}\bar{V}_N^0(\psi(T;x)) \\[-0.2cm]
            &\hspace{8em} \leq \bar{V}_N^0(\psi(0;x)) + \gamma^N\sum_{k=0}^{T-1}\gamma^k d
        \end{aligned}
    \end{equation*}
    We can also use the fact that $x_f:=\phi_\pi(N;x)$ to obtain
    \begin{equation*}
        \begin{aligned}
            \bar{V}_N^0(\psi(0;x)) \leq
            \sum_{k=0}^{N-1}\gamma^k\ell_\pi(k,x) + \gamma^N J_{\pi}(\phi_\pi(N;x))
        \end{aligned}
    \end{equation*}
    Moreover, by definition of $J_{\pi}(\cdot)$ we have that
    \begin{equation*}
        \begin{aligned}
            &\sum_{k=0}^{T-1}\gamma^k\ell_\kappa(k,x) + \gamma^T\bar{V}_N^0(\psi(T;x)) \leq J_{\pi}(x) + \gamma^N\sum_{k=0}^{T-1}\gamma^kd
        \end{aligned}
    \end{equation*}
    Since $\ell(\cdot)\geq 0$ we have that $\bar{V}_N^0(\cdot)\geq 0$, and the inequality in \cref{eq:perf_bound} follows.
\end{proof}

This finite horizon performance bound can be extended to the infinite horizon case as follows.

\begin{theorem}[Asymptotic performance bounds]
    If \cref{as:cont,as:finite_and_lipschitz} hold, then we have that for $\gamma\in(0,1)$:
    \begin{equation}\label{eq:asy_perf_bound_1}
        \limsup_{T\rightarrow\infty}
        \sum_{k=0}^{T-1} \gamma^k
        \big(\ell_\kappa(k,x)-\ell_\pi(k,x)\big) \leq
        \frac{\gamma^Nd}{1-\gamma}
    \end{equation}
    and for $\gamma=1$:
    \begin{equation}\label{eq:asy_perf_bound_2}
        \limsup_{T\rightarrow\infty}
        \frac{1}{T}\sum_{k=0}^{T-1}
        \big(\ell_\kappa(k,x)-\ell_\pi(k,x)\big) \leq d
    \end{equation}
    for all $x\in\bbX$. 
\end{theorem}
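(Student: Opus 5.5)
Both bounds follow from the transient performance bound in \cref{eq:perf_bound}, which already contains the comparison with $J_\pi(x)$. The only extra work is to express $J_\pi(x)$ through the partial sums of $\ell_\pi(k,x)$ and pass to the limit.

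For $\gamma\in(0,1)$, I fix $x\in\bbX$ and $T\geq 1$. Since $\ell\geq 0$, the partial sums $\sum_{k=0}^{T-1}\gamma^k\ell_\pi(k,x)$ are nondecreasing and converge to $J_\pi(x)<\infty$ by \cref{as:finite_and_lipschitz}. Subtracting $\sum_{k=0}^{T-1}\gamma^k\ell_\pi(k,x)$ from both sides of \cref{eq:perf_bound} gives
\begin{equation*}
\sum_{k=0}^{T-1}\gamma^k\big(\ell_\kappa(k,x)-\ell_\pi(k,x)\big)\leq \sum_{k=T}^{\infty}\gamma^k\ell_\pi(k,x) + \gamma^N d\,\frac{1-\gamma^T}{1-\gamma}.
\end{equation*}
The tail of a convergent series tends to zero, and $\gamma^T\to 0$. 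Taking $\limsup_{T\to\infty}$ therefore yields \cref{eq:asy_perf_bound_1}.

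For $\gamma=1$ this direct route does not work, because \cref{as:finite_and_lipschitz} with $\gamma=1$ makes $J_\pi(x)=\sum_{k\ge0}\ell_\pi(k,x)$ finite, but the relevant quantity is now an average. I would still use the same identity: \cref{eq:perf_bound} with $\gamma=1$ reads $\sum_{k=0}^{T-1}\ell_\kappa(k,x)\leq J_\pi(x)+Td$. Subtracting $\sum_{k=0}^{T-1}\ell_\pi(k,x)\geq 0$ gives
\begin{equation*}
\sum_{k=0}^{T-1}\big(\ell_\kappa(k,x)-\ell_\pi(k,x)\big)\leq J_\pi(x)+Td.
\end{equation*}
Dividing by $T$ gives an upper bound of $J_\pi(x)/T + d$. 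Since $J_\pi(x)$ is a finite constant independent of $T$, the $\limsup$ is at most $d$, which is \cref{eq:asy_perf_bound_2}.

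The only subtle point is in the $\gamma=1$ case. The bound must not depend on knowing that $\sum_{k<T}\ell_\pi(k,x)$ behaves like $T$ times an average, and dropping it via nonnegativity avoids exactly that. The key ingredient is the finiteness of $J_\pi(x)$ from \cref{as:finite_and_lipschitz}. If one instead wished to avoid that assumption at $\gamma=1$, one would need to telescope \cref{eq:cost_decrease} against the policy's own Bellman relation, and that is where difficulty would arise. As stated, however, finiteness is assumed, so both limits are routine.
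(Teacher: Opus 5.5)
Your proposal is correct and follows the paper's proof: subtract the policy's partial sums from the transient bound \cref{eq:perf_bound}, let the tail $\sum_{k\geq T}\gamma^k\ell_\pi(k,x)$ vanish for $\gamma\in(0,1)$, and divide by $T$ for $\gamma=1$; your step of discarding $\sum_{k<T}\ell_\pi(k,x)\geq 0$ is the same inequality as the paper's $0\leq J_\pi(\phi_\pi(T;x))\leq J_\pi(x)$. One small correction to your commentary: the tail argument does not fail at $\gamma=1$, since the tail still vanishes because $J_\pi(x)<\infty$, and the only reason to average is the linearly growing $Td$ term.
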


\begin{proof}
    For $\gamma\in(0,1)$, we start from the performance bound in \cref{eq:perf_bound}.
    Given the definition of $J_\pi(\cdot)$, in the limit as $T\rightarrow\infty$, we obtain \cref{eq:asy_perf_bound_1}.
    For $\gamma=1$, \cref{eq:perf_bound} becomes
    \begin{equation}\label{eq:perf_bound_gamma1}
            \sum_{k=0}^{T-1}\ell_\kappa(k,x) \leq J_{\pi}(x) + Td
    \end{equation}
    We substitute the definition of $J_{\pi}(x)$ and divide by the horizon length $T$ to obtain
    \begin{equation*}
        \begin{aligned}
            \frac{1}{T}\!\sum_{k=0}^{T-1}\!\big(\ell_\kappa(k,x)-\ell_\pi(k,x)\big) \leq \frac{1}{T}J_{\pi}(\phi_\pi(T;x)) + d 
        \end{aligned}
    \end{equation*}
    Since $\ell(\cdot)\geq0$, the Bellman equation for $\gamma=1$ implies that $0\leq J_{\pi}(\phi_\pi(T;x))\leq J_{\pi}(x)<\infty$ for all $T$. Hence, $J_{\pi}(\phi_\pi(T;x))/T\rightarrow0$ as $T\rightarrow\infty$, which yields \cref{eq:asy_perf_bound_2}.
\end{proof}

Finally, we consider the special case when $\gamma=1$ and $d=0$, which occurs when the terminal cost weight $\rho$ is chosen to be at least as large as the Lipschitz constant $L$ of the value function $J_{\pi}(\cdot)$ or when the terminal region radius $\tau$ is set to zero.

\begin{corollary}[Asymptotic performance bounds for $d=0$]
    If \cref{as:cont,as:finite_and_lipschitz} hold, then we have that for 
    $\gamma=1$ and $d=0$:
    \begin{equation}\label{eq:asy_perf_bound_dzero}
        \limsup_{T\rightarrow\infty}
        \sum_{k=0}^{T-1}
        \big(\ell_\kappa(k,x)-\ell_\pi(k,x)\big) \leq 0
    \end{equation}
    for all $x\in\bbX$. 
\end{corollary}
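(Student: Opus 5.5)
We derive the corollary directly from the transient performance bound in \cref{eq:perf_bound}, specialized to $\gamma=1$ and $d=0$. In that case the sum $\gamma^N\sum_{k=0}^{T-1}\gamma^k d$ vanishes, so \cref{eq:perf_bound_gamma1} reduces to
\begin{equation*}
    \sum_{k=0}^{T-1}\ell_\kappa(k,x) \leq J_\pi(x) \qquad \forall T\geq 1,\ x\in\bbX .
\end{equation*}

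Next we split $J_\pi(x)$ as in the proof of the asymptotic bounds. Applying the Bellman equation with $\gamma=1$ recursively $T$ times gives
\begin{equation*}
    J_\pi(x)=\sum_{k=0}^{T-1}\ell_\pi(k,x)+J_\pi(\phi_\pi(T;x)).
\end{equation*}
Both sums are finite for finite $T$, so we may subtract them and obtain
\begin{equation*}
    \sum_{k=0}^{T-1}\bigl(\ell_\kappa(k,x)-\ell_\pi(k,x)\bigr)\leq J_\pi(\phi_\pi(T;x)).
\end{equation*}

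It remains to show that $J_\pi(\phi_\pi(T;x))\to 0$ as $T\to\infty$. This step is the only real obstacle: unlike the averaged bound \cref{eq:asy_perf_bound_2}, there is no $1/T$ factor, so mere boundedness of $J_\pi(\phi_\pi(T;x))$ is not enough.

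To handle it, we use that $J_\pi(x)=\sum_{k=0}^{\infty}\ell_\pi(k,x)$ is a convergent series of nonnegative terms, since $\ell\geq0$ and $J_\pi(x)<\infty$ by \cref{as:finite_and_lipschitz}. By the semigroup property of the closed-loop trajectory, $\phi_\pi(k;\phi_\pi(T;x))=\phi_\pi(k+T;x)$. Hence $J_\pi(\phi_\pi(T;x))=\sum_{k=T}^{\infty}\ell_\pi(k,x)$, which is the tail of a convergent series and therefore tends to zero.

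Taking the $\limsup$ as $T\to\infty$ on both sides then yields \cref{eq:asy_perf_bound_dzero} for every $x\in\bbX$.
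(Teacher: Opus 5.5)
Your proof is correct and is essentially the paper's argument: set $d=0$ in the $\gamma=1$ transient bound, substitute the definition of $J_\pi(x)$, and let $T\to\infty$. You also make explicit the step the paper leaves implicit, namely that the residual $J_\pi(\phi_\pi(T;x))=\sum_{k=T}^{\infty}\ell_\pi(k,x)$ is the tail of a convergent nonnegative series and therefore vanishes, which is needed here because, unlike for the averaged bound, boundedness alone would not suffice.
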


\begin{proof}
    Setting $d=0$ in \cref{eq:perf_bound_gamma1} and substituting the definition of $J_{\pi}(x)$ yields \cref{eq:asy_perf_bound_dzero} in the limit as $T\rightarrow\infty$.
\end{proof}

\section{Numerical Example}
In this example, we analyze an optimal control problem in which the optimal steady state is not desirable.
Consider an energy system with two batteries that coordinate with the grid to satisfy demand while minimizing costs under a time-varying energy price.
The two batteries have different characteristics: a small battery with high efficiency and a large battery with lower efficiency but higher capacity.
Let $x_1(k)$ and $x_2(k)$ denote the energy stored in the small and large batteries, respectively.
The dynamics of the system evolve as follows:
\begin{equation}\label{eq:example2:dynamics}
\begin{aligned}
    x_1(k+1) =& x_1(k) + \eta_1 u_2(k) - \tfrac{1}{\eta_1} u_4(k), \\
    x_2(k+1) =& x_2(k) + \eta_2 u_3(k) - \tfrac{1}{\eta_2} u_5(k),
\end{aligned}
\end{equation}
Inputs $u_2(k)$ and $u_3(k)$ represent the energy drawn from the grid to charge the small and large batteries, while $u_4(k)$ and $u_5(k)$ represent the energy discharged from the small and large batteries to satisfy demand.
One more input, $u_1(k)$, represents the energy drawn from the grid to satisfy the demand directly.
Finally, $\eta_1$ and $\eta_2$ are the efficiencies of the small and large batteries, respectively, set to $\eta_1=0.99$ and $\eta_2=0.95$.
At each time step, the system incurs a stage cost defined as the cost of energy drawn from the grid:
\begin{equation}\label{eq:example2:stage_cost}
    \begin{aligned}
        \ell(x,u) = \big(u_1(k) + u_2(k) + u_3(k)\big)p(k)
    \end{aligned}
\end{equation}
where $p(k)$ represents the time-varying price of energy drawn from the grid.
The energy price $p(k)$ introduces time-varying dynamics; to fit the time-invariant framework in \cref{eq:f}, we augment the state with the time index $k$.
The full state is thus defined as $x(k) = (x_1(k),\, x_2(k),\, k)$.
The system is subject to demand, input, and battery-state constraints, defined as:
\begin{subequations}\label{eq:example2:constraints}
    \begin{align}
        &u_1(k)+u_4(k)+u_5(k) = E^{ref}, \label{eq:example2:constraints_a} \\
        &0 \leq x_1(k) \leq 2400, \label{eq:example2:constraints_b} \\
        &0 \leq x_2(k) \leq 33600, \label{eq:example2:constraints_c} \\
        &0 \leq u_i(k) \leq 100, \quad \forall i=1,\dots,5. \label{eq:example2:constraints_d}
    \end{align}
\end{subequations}
Here, $E^{ref}=40$ is the constant energy demand that must be met at each time step by the combination of grid energy and battery discharges.
The system is simulated over a 24-day operating horizon with a sampling time of $\Delta t = 0.5$ hours, yielding $1152$ time steps.
The objective is to minimize the total cost of energy purchased from the grid over the operating horizon while satisfying the demand and constraints at each time step.

Any steady state requires constant battery energy levels, precluding the load shifting needed to exploit the time-varying price structure.
Consequently, designing terminal costs and constraints around any steady state would be inadequate for this problem.
An alternative would be to construct a periodic trajectory a priori and use it to define a terminal equality constraint.
However, this approach requires prior knowledge of the price profile and is fragile to model mismatch and disturbances.

A standard MPC formulation that minimizes the stage cost in \cref{eq:example2:stage_cost} over a prediction horizon of $N=48$ steps (one day), subject to the dynamics in \cref{eq:example2:dynamics} and constraints in \cref{eq:example2:constraints}, is considered.
The simulated closed-loop trajectory of this MPC controller is shown in \cref{fig:energy_management}.
This controller fails to effectively utilize the batteries, resulting in high operational costs.
In particular, the MPC is myopic, as it almost never charges the large battery.
The small battery is preferred in the limited prediction horizon that it considers, as it is more efficient.

We design a rule-based guiding policy $\pi(\cdot)$ that controls only the large battery, with decisions based solely on the current energy price $p(k)$.
When the price is below a threshold $\bar{p}$, the policy charges the large battery at a fixed rate while satisfying demand from the grid. When the price exceeds $\bar{p}$, demand is met by discharging the large battery, with the grid covering any shortfall:
\begin{equation*}
    \pi(x) = \begin{cases}
        (E^{ref},\, 0,\, \bar{u}_3,\, 0,\, 0) & \text{if } p(k) < \bar{p} \\
        (E^{ref}\!-\!u_5,\, 0,\, 0,\, 0,\, u_5) & \text{if } p(k) \geq \bar{p}
    \end{cases}
\end{equation*}
where $\bar{p}=2$ is the price threshold, $\bar{u}_3=50$ is the fixed charging rate, and $u_5 = \min(E^{ref}, x_2(k))$.

PG-MPC uses the same prediction horizon and constraints as the standard MPC, with the rule-based policy determining the terminal-region center $x_f$.
The states $x$ and $x_f$ always correspond to the same time index $k$, so their time-index components coincide.
The terminal ingredients are therefore
\begin{equation*}
    \begin{gathered}
        \bbX_f(x_f) = \{x\in\bbX\mid \lvert x-x_f\rvert\leq\tau\}, \\
        V_f(x,x_f) = \rho\lvert x-x_f\rvert.
    \end{gathered}
\end{equation*}
The two parameters play complementary roles.
The radius $\tau$ delimits how far the predicted terminal state may deviate from the policy rollout: a small $\tau$ confines PG-MPC to trajectories ending close to the policy, while a large $\tau$ admits substantial deviations.
Within this region, the weight $\rho$ determines how strongly the terminal state is drawn back towards $x_f$: a large $\rho$ keeps PG-MPC closely anchored to the guiding policy, whereas a small $\rho$ allows it to exploit the economics of the prediction horizon and thereby improve upon the policy.
This suggests selecting the two sequentially: $\tau$ first, as the largest deviation from the policy rollout that is reachable within $N$ steps and acceptable for the application, and $\rho$ thereafter, trading improved performance against closer adherence to the policy.
For the simulations, we set $\tau = 1680$ and $\rho = 1.5$.
Through the policy-generated terminal center, PG-MPC overcomes the myopia of standard MPC while retaining its ability to optimize over the short prediction horizon.
The closed-loop trajectory, shown in \cref{fig:energy_management}, confirms that PG-MPC effectively utilizes both batteries and achieves substantially lower costs than either the standard MPC or the rule-based policy alone.
Importantly, this improvement is achieved without providing PG-MPC with any explicit knowledge of the price profile beyond that available to standard MPC; the additional information is embedded implicitly in the guiding policy through the price threshold $\bar p$.
As a benchmark, the globally optimal trajectory is obtained by solving the full 24-day optimal control problem.
PG-MPC incurs a total cost of $53245$, which is $5.4\%$ above the full-horizon optimum of $50527$, whereas standard MPC and the rule-based policy incur $78104$ and $70532$, i.e., $54.6\%$ and $39.6\%$ above the optimum.
Demonstrating that the combination of a simple heuristic with a short-horizon PG-MPC formulation can significantly improve performance.

\begin{figure}
    \centering
    \includegraphics[width=0.5\textwidth]{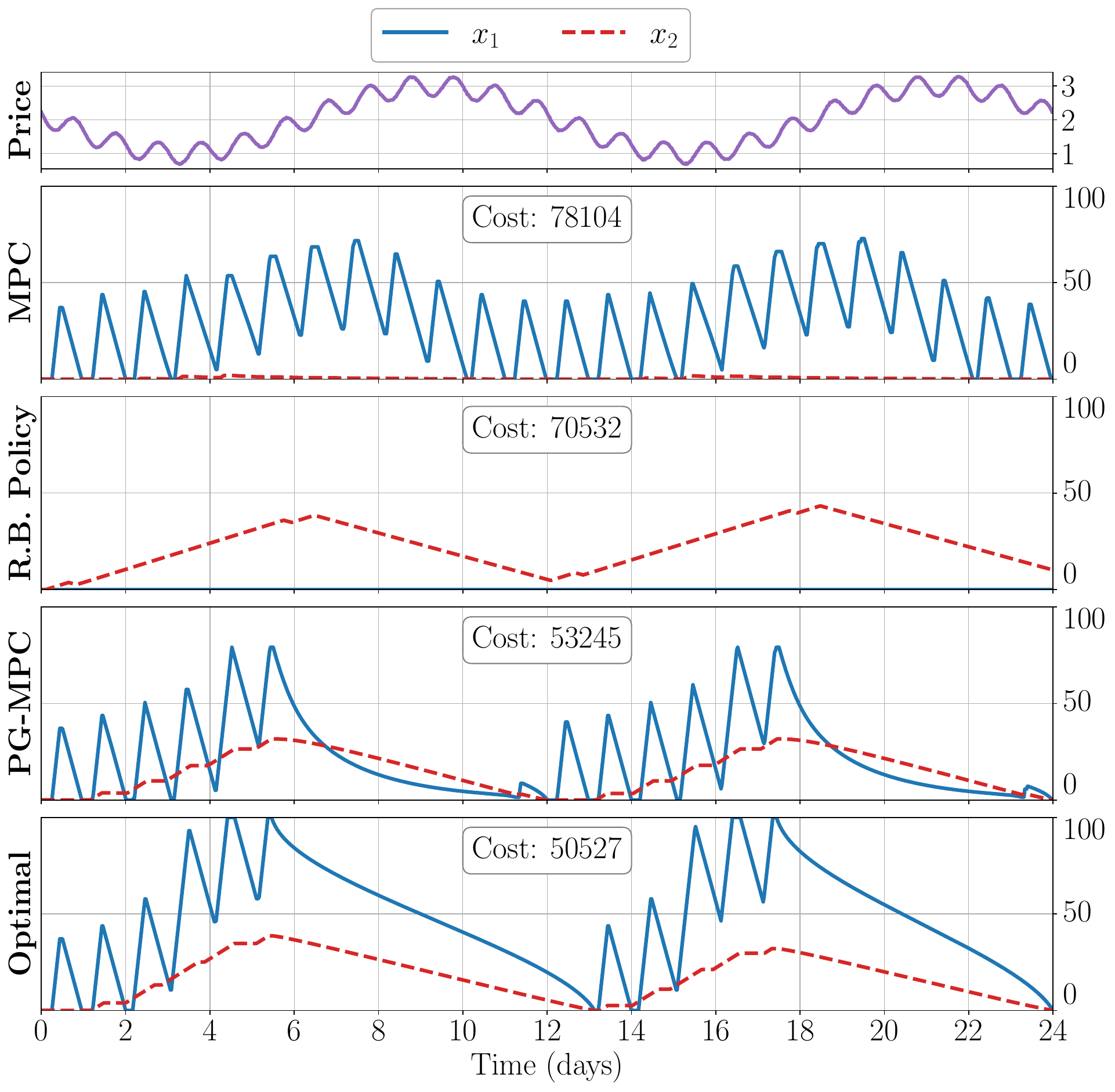}
    \caption{Simulation results for the energy management example over a 24-day horizon. The top panel shows the time-varying energy price $p(k)$. The four lower panels compare the normalized battery-state trajectories, $100x_1/2400$ and $100x_2/33600$, under standard MPC, the rule-based policy, PG-MPC, and the full-horizon optimal solution. Each controller panel reports its incurred total cost.}
    \label{fig:energy_management}
\end{figure}

\section{Conclusion}
This work introduced PG-MPC, a framework that constructs terminal costs and constraints from a given control policy rather than a steady state. We established closed-loop performance guarantees relative to the guiding policy and demonstrated the approach on an energy management example where steady-state-based designs are inadequate. Future work will investigate incorporating additional information into the terminal cost, stability in the case of a converging policy and inherent robustness.


\bibliographystyle{IEEEtran}
\bibliography{references}

\end{document}